\documentclass[letterpaper,11pt]{article}
\usepackage{xspace}
\usepackage{amsmath}
\setlength{\textwidth}{6.5in} \setlength{\textheight}{9in}
\setlength{\oddsidemargin}{0in} \setlength{\evensidemargin}{0in}
\setlength{\hoffset}{0in} \setlength{\voffset}{0in}
\setlength{\marginparsep}{0in} \setlength{\marginparwidth}{0in}
\setlength{\topmargin}{0in} \setlength{\headheight}{0in}
\setlength{\headsep}{0in}

\parskip 4pt

\bibliographystyle{plain}

\newtheorem{thm}{Theorem}[section]
\newtheorem{cor}[thm]{Corollary}
\newtheorem{lemma}[thm]{Lemma}

\newtheorem{claim}[thm]{Claim}
\newtheorem{defn}[thm]{Definition}
\newtheorem{obs}[thm]{Observation}

\newtheorem{definition}{Definition}
\def\squarebox#1{\hbox to #1{\hfill\vbox to #1{\vfill}}}
\newcommand{\qed}{\hspace*{\fill}
\vbox{\hrule\hbox{\vrule\squarebox{.667em}\vrule}\hrule}\smallskip}
\newenvironment{proof}{\noindent{\bf Proof:~~}}{\(\qed\)}

\begin{document}
\setcounter{page}{0}
\author{Elchanan Mossel\footnote{Statistics and Computer Science,
U.C. Berkeley, and Mathematics and Computer Science Weizmann
Institute. Supported by Sloan fellowship in Mathematics, NSF Career
award DMS 0548249, DOD grant N0014-07-1-05-06, and by ISF.
mossel@stat.berkeley.edu}\and Christos
Papadimitriou\footnote{Computer Science Division University of
California at Berkeley, CA, 94720 USA. christos@cs.berkeley.edu}
\and Michael Schapira\footnote{Department of Computer Science, Yale
University, CT, USA, and Computer Science Division, University of
California at Berkeley, CA, USA. Supported by NSF grant 0331548.
michael.schapira@yale.edu.}\and Yaron Singer\footnote{ Computer
Science Division University of California at Berkeley, CA, 94720
USA. yaron@cs.berkeley.edu.}}

\title{VC {\em v.} VCG: Inapproximability of Combinatorial Auctions via Generalizations of the VC Dimension}
\date{}
\maketitle

\begin{abstract}
The existence of incentive-compatible computationally-efficient
protocols for combinatorial auctions with decent approximation
ratios is the paradigmatic problem in computational mechanism
design. It is believed that in many cases good approximations for
combinatorial auctions may be unattainable due to an inherent clash
between truthfulness and computational efficiency. However, to date,
\emph{researchers lack the machinery to prove such results}. In this
paper, we present a new approach that we believe holds great promise
for making progress on this important problem. We take the first
steps towards the development of new technologies for lower bounding
the \emph{VC-dimension of $k$-tuples of disjoint sets}. We apply
this machinery to prove the first \emph{computational-complexity}
inapproximability results for incentive-compatible mechanisms for
combinatorial auctions. These results hold for the important class
of VCG-based mechanisms, and are based on the complexity assumption
that NP has no polynomial-size circuits.
\end{abstract}

\newpage

\section{Introduction}

The field of \emph{algorithmic mechanism design}~\cite{NR01} is
about the reconciliation of bounded computational resources and
strategic interaction between selfish participants. In
\emph{combinatorial auctions}, the paradigmatic problem in this
area, a set of items is sold to bidders with \emph{private}
preferences over \emph{subsets} of the items, with the intent of
maximizing the \emph{social welfare} (\emph{i.e.}, the sum of
bidders' utilities from their allocated items). Researchers
constantly seek auction protocols that are both
\emph{incentive-compatible} and \emph{computationally-efficient},
and guarantee decent approximation ratios. Sadly, to date, huge gaps
exist between the state of the art approximation ratios obtained by
unrestricted, and by truthful, algorithms. It is believed that this
could be due to an inherent clash between the truthfulness and
computational-efficiency requirements, that manifests itself in
greatly degraded algorithm performance. Such tension between the two
desiderata was recently shown to exist in~\cite{PSS08} for a
different mechanism design problem called \emph{combinatorial public
projects}~\cite{SS08}. However, in the context of combinatorial
auctions, due to their unique combinatorial structure, \emph{the
algorithmic game theory community currently lacks the machinery to
prove this}~\cite{T08}.

The celebrated class of {\em Vickrey-Clarke-Groves (VCG)
mechanisms}~\cite{Vic61, Cla71, Gro73} is the only known universal
technique for the design of \emph{deterministic}
incentive-compatible mechanisms (in certain interesting cases VCG
mechanisms are the \emph{only} truthful
mechanisms~\cite{LMNB,DS08,LMN,R79,PSS08}). While a naive
application of VCG is often computationally intractable, more clever
uses of VCG are the key to the best known (deterministic)
approximation ratios for combinatorial auctions~\cite{DNS05,HKMT04}.
For these reasons, the exploration of the computational limitations
of such mechanisms is an important research agenda (pursued
in~\cite{DN07,DS08,LMN,NR00,PSS08}). Recently, it was
shown~\cite{PSS08} that the computational-complexity of VCG-based
mechanisms is closely related to the notion of VC
dimension.~\cite{PSS08} was able to make use of \emph{existing} VC
machinery to prove computational hardness results for combinatorial
public projects. However, for combinatorial auctions, these
techniques are no longer applicable. This is because, unlike
combinatorial public projects, the space of outcomes in
combinatorial auctions does not consist of subsets of the universe
of items, bur rather of \emph{partitions} of this universe (between
the bidders). This calls for the development of new VC machinery for
the handling of such problems.

We formally define the notion of the \emph{VC dimension of
collections of partitions of a universe}\footnote{By a {\em
partition} in this paper we mean an ordered $k$-tuple of disjoint
subsets which, however, may not exhaust the universe.} and present
techniques for establishing lower bounds on this VC dimension. We
show how these results can be used to prove computational-complexity
inapproximability results for VCG-based mechanisms for combinatorial
auctions. (We note that these results actually hold for the more
general class of mechanisms that are affine
maximizers~\cite{LMN,R79}.) Our inapproximability results depend on
the computational assumption that SAT does not have polynomial-size
circuits. Informally, our method of lower bounding\footnote{We use
the term lower bound as a general reference to an inapproximability
result. Hence, a lower bound of $\frac{1}{2}$ means (as we are
looking at a maximization problem) that no approximation better than
$\frac{1}{2}$ is possible. This use is similar to that of Hastad
in~\cite{Has01}.} the approximability of VCG-based mechanisms via VC
arguments is the following: We consider well-known auction
environments for which \emph{exact} optimization is NP-hard. We show
that if a VCG-based mechanism \emph{approximates closely} the
optimal social welfare, then it is implicitly solving {\em
optimally} a smaller, but still relatively large, optimization
problem of the same nature
--- an NP-hard feat. We establish this by showing that the subset
of outcomes (\emph{partitions} of items) considered by the VCG-based
mechanism ``shatters'' a relatively large subset of the items.

Our results imply the first \emph{computational complexity} lower
bounds for VCG-based mechanisms, and truthful mechanisms in general,
for combinatorial auctions (with the possible exception of a result
in~\cite{LMN} for a related auction environment). In fact, we show
that in some auction environments these results actually apply to
\emph{all} truthful algorithms. We illustrate our techniques via
$2$-bidder combinatorial auctions, and believe that our approach
holds great promise for making progress on the general problem. It
is also our belief that the notion of the VC dimension of $k$-tuples
of disjoint sets is of independent interest, and suggests many new
and exciting problems in combinatorics.

\subsection{Related Work}

Combinatorial auctions have been extensively studied in both the
economics and the computer science literature~\cite{BN,CSS05,VV03}.
It is known that if the preferences of the bidders are unrestricted
then no constant approximation ratios are achievable (in polynomial
time)~\cite{LOS:J,NS06}. Hence, much research has been devoted to
the exploration of restrictions on bidders' preferences that allow
for good approximations, \emph{e.g.}, for \emph{subadditive}, and
\emph{submodular}, preferences \emph{constant} approximation ratios
have been obtained~\cite{DNS05,DS06,F06,FV06, LLN06,V08}. In
contrast, the known \emph{truthful} approximation algorithms for
these classes have \emph{non-constant} approximation
ratios~\cite{D07,DNS05,DNS06}. It is believed that this gap may be
due to the computational burden imposed by the truthfulness
requirement. However, to date, this belief remains unproven. In
particular, no \emph{computational complexity} lower bounds for
truthful mechanisms for combinatorial auctions are known.

Vickrey-Clarke-Groves (VCG) mechanisms~\cite{Vic61, Cla71, Gro73},
named after their three inventors, are the fundamental technique in
mechanism design for inducing truthful behaviour of strategic
agents. Nisan and Ronen~\cite{NR00,NR01} were the first to consider
the computational issues associated with the VCG technique. In
particular,~\cite{NR00} defines the notion of VCG-Based mechanisms.
VCG-based mechanisms have proven to be useful in designing
approximation algorithms for combinatorial
auctions~\cite{DNS05,HKMT04}. In fact, the best known
(deterministic) truthful approximation ratios for combinatorial
auctions were obtained via VCG-based mechanisms~\cite{DNS05,HKMT04}
(with the notable exception of an algorithm in~\cite{BGN03} for the
case that many duplicates of each item exist). Moreover, Lavi,
Mu'alem and Nisan~\cite{LMN} have shown that in certain interesting
cases VCG-based mechanisms are essentially the only truthful
mechanisms (see also~\cite{DS08}).

Dobzinski and Nisan~\cite{DN07} tackled the problem of proving
inapproximability results for VCG-based mechanisms by taking a
\emph{communication complexity}~\cite{Y79, KN97} approach. Hence, in
the settings considered in~\cite{DN07}, it is assumed that each
bidder has an \emph{exponentially large string of preferences} (in
the number of items). However, real-life considerations render
problematic the assumption that bidders' preferences are exponential
in size. Our intractability results deal with bidder preferences
that are \emph{succinctly described}, and therefore relate to
\emph{computational complexity}. Thus, our techniques enable us to
prove lower bounds even for the important case in which bidders'
preferences can be concisely represented.

The connection between the VC dimension and VCG-based mechanisms was
observed in~\cite{PSS08}, where a general (i.e., not restricted to
VCG-based mechanisms) inapproximability result was presented, albeit
in the context of a different mechanism design problem, called
\emph{combinatorial public projects} (see also~\cite{SS08}). The
analysis in~\cite{PSS08} was carried out within the standard VC
framework, and so it relied on \emph{existing} machinery (namely,
the Sauer-Shelah Lemma~\cite{Sauer72, Shelah72} and its
probabilistic version due to Ajtai~\cite{Ajtai98}). To handle the
unique technical challenges posed by combinatorial auctions
(specifically, the fact that the universe of items is
\emph{partitioned} between the bidders) \emph{new} machinery is
required. Indeed, our technique can be interpreted as \emph{an
extension of the Sauer-Shelah Lemma to the case of partitions}
(Lemma~\ref{lem-VC-partitions} in Sec.~\ref{sec_partitions-vc}).

The VC framework has received much attention in past decades (see,
\emph{e.g.},~\cite{ABCH97,BartlettL98,MendelsonV02} and references
therein), and many generalizations of the VC dimension have been
proposed and studied (\emph{e.g.},~\cite{Alon83}). To the best of
our knowledge, none of these generalizations captures the case of
$k$-tuples of disjoint subsets of a universe considered in this
paper. In addition, no connection was previously made between the
the VC dimension and the approximability of combinatorial auctions.

\subsection{Organization of the Paper}

In Sec.~\ref{sec_partitions-vc} we present our approach to analyzing
the VC dimension of partitions. In Sec.~\ref{sec_implications} we
prove our inapproximability results for combinatorial auctions. We
conclude and present open questions in Sec.~\ref{sec_open}.

%

\section{The VC Dimension of Partitions}\label{sec_partitions-vc}

The main hurdle in combinatorial auctions stems from the fact that
the outcomes are \emph{partitions} of the set of items.
Approximation algorithms for combinatorial auctions can be thought
of as functions that map bidders' preferences to partitions of
items. This motivates our extension of the standard notion of VC
dimension to \emph{the VC dimension of partitions}. This section
presents this notion of VC dimension and lower bounding techniques.
In Sec.~\ref{sec_implications} we harness this machinery to prove
computational complexity lower bounds for combinatorial auctions.

\subsection{The VC Dimension and $\alpha$-Approximate Collections of
Partitions}\label{subsec_approx-partitions}

We focus on partitions that consist of two disjoint subsets (our
definitions can easily be extended to $k$-tuples). Our formal
definition of a partition of a universe is the following:

\begin{defn} [partitions]\label{def_n-partition}
A partition $T=(T_1,T_2)$ of a universe $U=\{1,...,m\}$ is a pair of
two disjoint subsets of $U$, i.e. $T_1,T_2 \subseteq [m]$ and $T_1
\cap T_2 = \emptyset$.
\end{defn}

Observe that we \emph{do not} require that every element in the
universe appear in one of the two disjoint subsets that form a
partition. This definition of partitions will later enable us to
address crucial aspects of combinatorial auctions. We refer to
partitions that \emph{do} exhaust the universe (\emph{i.e.}, cover
all elements in the universe) as ``\emph{covering partitions}'' (we
shall refer to not-necessarily-covering partitions as
``\emph{general partitions}'').

\begin{defn} [covering partitions]\label{def_n-partition-cover}
A partition $(T_1,T_2)$ of a universe $U$ is said to cover $U$ if
$T_1 \cup T_2=U$. $C(U)$ is defined to be the set of all partitions
that cover $U$.
\end{defn}

For every subset $E$ of a universe $U$, we can define (in an
analogous way) what a partition of $E$ is, and denote by $P(E)$ the
set of all partitions of $E$ and by $C(E)$ the set of all partitions
of $E$ that cover $E$.

\begin{defn} [projections]\label{def_projection}
The projection of a partition $(S_1,S_2)\in P(U)$ on $E\subseteq U$,
denoted by $(S_1,S_2)_{|E}$, is the partition $(S_1\cap E,S_2\cap
E)\in P(E)$. For any collection of partitions $R\subseteq P(U)$ we
define $R$'s projection on $E\subseteq U$, $R_{|E}$, to be
$R_{|E}=\{(T_1,T_2)|\ \exists (S_1,S_2)\in R\ s.t.\
(S_1,S_2)_{|E}=(T_1,T_2)\}$.
\end{defn}

Observe that if a partition $(S_1,S_2)$ of $E\subseteq U$ is in
$C(E)$, then for any $E'\subseteq E$ $(S_1,S_2)_{|E'}\in C(E')$. We
are now ready to define the notions of shattering and VC dimension
in our context:

\begin{defn} [shattering]\label{def_shattered}
A subset $E\subseteq U$ is said to be shattered by a collection of
partitions $R\subseteq P(U)$ if $C(E)\subseteq R_{|E}$.
\end{defn}

Observe that that if $E\subseteq U$ is shattered by a collection of
partitions $R\subseteq P(U)$ then so are all subsets of $E$. By
Definition~\ref{def_shattered}, for a subset $E\subseteq U$ to be
shattered it suffices that $C(E)\subseteq R_{|E}$. We do not require
that $R_{|E}=P(E)$. However, we note that all of our results for
general partitions actually also hold for the latter (stronger)
requirement.

\begin{defn} [VC dimension]\label{def_VC}
The VC dimension $VC(R)$ of a collection of partitions $R\subseteq
P(U)$ is the cardinality of the biggest subset $E\subseteq U$ that
is shattered by $R$.
\end{defn}

We now introduce the useful concept of $\alpha$-approximate
collections of partitions. Informally, a collection $R$ of
partitions is $\alpha$-approximate if, for every partition $S$ of
the universe (not necessarily in $R$), there is some partition in
$R$ that is ``not far'' (in terms of $\alpha$) from $S$. We are
interested in the connection between the value of $\alpha$ of an
$\alpha$-approximate collection of partitions and its VC dimension.
This will play a major role in the proofs of our results for
combinatorial auctions.

\begin{definition} [$\alpha$-approximate collections of
partitions]\label{def_approximate-partitions} Let $R$ be a
collection of partitions of a universe $U$. $R$ is said to be
$\alpha$-approximate if for every partition $S=(S_1,S_2)\in P(U)$
there exists some partition $T=(T_1,T_2)\in R$ such that $|S_1\cap
T_1|+|S_2\cap T_2|\geq \alpha(|S_1|+|S_2|)$.
\end{definition}

\subsection{Lower Bounding the VC Dimension of Collections of
Covering Partitions}

When dealing with collections of covering partitions it is possible
to use existing VC machinery to lower bound their VC dimension.
Specifically, a straightforward application of the Sauer-Shelah
Lemma~\cite{Sauer72,Shelah72} implies that:

\begin{lemma} [lower bounding the VC dimension of covering partitions]\label{lem_vc_cover}
For every $R\subseteq C(U)$ it holds that $VC(R)=\Omega({\log
|R|\over \log |U|})$.
\end{lemma}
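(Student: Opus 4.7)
The plan is to reduce the statement to the classical Sauer--Shelah lemma by identifying covering partitions with their first components. Specifically, every $(T_1,T_2)\in C(U)$ is determined by $T_1$, since $T_2=U\setminus T_1$, so the map $(T_1,T_2)\mapsto T_1$ is a bijection between $C(U)$ and $2^U$. Under this bijection, the collection $R\subseteq C(U)$ corresponds to a set system $F=\{T_1:(T_1,T_2)\in R\}\subseteq 2^U$ with $|F|=|R|$.

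Next I would verify that the partition-VC dimension of $R$ coincides with the classical VC dimension of $F$. For any $E\subseteq U$ and any $(S_1,S_2)\in R$ we have $(S_1,S_2)_{|E}=(S_1\cap E,\,E\setminus S_1)\in C(E)$, so $R_{|E}\subseteq C(E)$; in particular the first-component projection $F_{|E}:=\{S_1\cap E:(S_1,S_2)\in R\}$ is in bijection with $R_{|E}$. Because every element of $C(E)$ is likewise determined by its first component, which ranges over all of $2^E$, the condition $C(E)\subseteq R_{|E}$ (shattering in the sense of Definition~\ref{def_shattered}) is equivalent to $F_{|E}=2^E$, which is exactly classical shattering.

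Now I would apply Sauer--Shelah to $F$. Let $d=VC(R)$. Then $F$ classically shatters no subset of $U$ of size $d+1$, so
\[
|R|=|F|\leq \sum_{i=0}^{d}\binom{|U|}{i}\leq \Bigl(\frac{e\,|U|}{d}\Bigr)^{d}\leq (|U|+1)^{d}.
\]
Taking logarithms gives $\log|R|=O(d\log|U|)$, hence $d=\Omega(\log|R|/\log|U|)$, which is the desired bound.

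The only point requiring care is the equivalence of the two shattering notions in the second step, but once one observes that covering partitions of any set $E$ are in bijection with $2^E$ via the first component, the reduction is immediate and there is no real obstacle. Everything else is a direct invocation of Sauer--Shelah and standard inversion of its bound.
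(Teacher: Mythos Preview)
Your proposal is correct and follows essentially the same approach as the paper: identify each covering partition with its first component to obtain a set system $F$ with $|F|=|R|$, then invoke the classical Sauer--Shelah lemma. You are in fact a bit more careful than the paper in spelling out why classical shattering of $E$ by $F$ coincides with partition-shattering of $E$ by $R$; the paper simply asserts this is ``immediate'' from the covering property. One small quibble: the middle inequality $(e|U|/d)^d\le(|U|+1)^d$ in your chain fails for very small $d$, but since $\sum_{i=0}^d\binom{|U|}{i}\le\sum_{i=0}^d|U|^i\le(|U|+1)^d$ holds directly, the conclusion $\log|R|=O(d\log|U|)$ is unaffected.
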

\begin{proof}
Let $R_1=\{S_1|\ \exists S_2\ s.t.\ (S_1,S_2)\in R\}$. Because $R$
only consists of covering partitions it must be that $|R_1|=|R|$. We
now recall the Sauer-Shelah Lemma:

\begin{lemma} (\cite{Sauer72,Shelah72})
For any family $Z$ of subsets of a universe $U$, there is a subset
$E$ of $U$ of size $\Theta({\log |Z|\over \log |U|})$ such that for
each $E'\subseteq E$ there is a $Z'\in Z$ such that $E'=Z'\cap E$.
\end{lemma}

The Sauer-Shelah Lemma, when applied to $R_1$ implies the existence
of a set $E$ (as in the statement of the lemma) of size
$\Omega({\log |R_1|\over \log |U|})$ (that is, a large set that is
shattered in the traditional sense). The fact that all partitions in
$R$ are covering partitions now immediately implies that
$VC(R)=\Omega({\log |R_1|\over \log |U|})=\Omega({\log |R|\over \log
|U|})$.
\end{proof}

Lemma~\ref{lem_vc_cover} enables us to prove a lower bounds on the
VC dimension of $(\frac{1}{2}+\epsilon)$-approximate collections of
\emph{covering} partitions.

\begin{thm}\label{thm_vc_covering}
Let $R\subseteq C(U)$. If $R$ is
$(\frac{1}{2}+\epsilon)$-approximate (for any small constant
$\epsilon>0$) then there exists some constant $\alpha>0$ such that
$VC(R)\geq m^{\alpha}$.
\end{thm}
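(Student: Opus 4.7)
The plan is to reduce to Lemma~\ref{lem_vc_cover} by showing that any $(\tfrac{1}{2}+\epsilon)$-approximate $R\subseteq C(U)$ must be exponentially large in $m$. Since Lemma~\ref{lem_vc_cover} yields $VC(R)=\Omega(\log|R|/\log|U|)$, a bound of the form $\log|R|=\Omega(m)$ will give $VC(R)=\Omega(m/\log m)$, which exceeds $m^\alpha$ for any $\alpha<1$ and all sufficiently large $m$.

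To obtain the size lower bound, I would first identify covering partitions with binary strings via the bijection $T=(T_1,T_2)\in C(U)\leftrightarrow t\in\{0,1\}^m$ where $t_i=1$ iff $i\in T_1$. Under this identification, the ``approximation quality'' $|S_1\cap T_1|+|S_2\cap T_2|$ is exactly the Hamming agreement $m-d_H(s,t)$, because for covering partitions every $i\in U$ lies in either the first or second part of both $S$ and $T$. Consequently, $R$ being $(\tfrac{1}{2}+\epsilon)$-approximate translates to: for every $s\in\{0,1\}^m$ there exists $t$ (the string associated with some partition in $R$) with $d_H(s,t)\leq(\tfrac{1}{2}-\epsilon)m$. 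In coding-theoretic language, $R$ corresponds to a covering code of radius $(\tfrac{1}{2}-\epsilon)m$.

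Next I would apply the standard sphere-covering bound. A Hamming ball of radius $r=(\tfrac{1}{2}-\epsilon)m$ in $\{0,1\}^m$ has volume at most $2^{H(\tfrac{1}{2}-\epsilon)m}$, where $H(\cdot)$ is the binary entropy. Since $H(\tfrac{1}{2}-\epsilon)<1$ for every $\epsilon>0$ (in fact $H(\tfrac{1}{2}-\epsilon)=1-\Theta(\epsilon^2)$), there exists $\delta=\delta(\epsilon)>0$ with the ball size bounded by $2^{(1-\delta)m}$. (Equivalently, a Chernoff bound shows that the number of $s$ approximated by any single $t$ is at most $2^{(1-\delta)m}$.) Because the balls around the elements of $R$ must cover all of $\{0,1\}^m$, we obtain
\[
2^m \;\leq\; |R|\cdot 2^{(1-\delta)m},
\]
hence $|R|\geq 2^{\delta m}$ and $\log|R|\geq\delta m$.

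Plugging this into Lemma~\ref{lem_vc_cover} gives $VC(R)=\Omega(\delta m/\log m)$, and for large enough $m$ this is at least $m^\alpha$ for any fixed $\alpha<1$, completing the proof. There is no real obstacle here: the one point to handle carefully is verifying that the entropy/ball-volume exponent is strictly less than $1$ and that the slack $\delta$ is a constant depending only on $\epsilon$, so that the final bound on $|R|$ is genuinely exponential in $m$ rather than quasi-polynomial.
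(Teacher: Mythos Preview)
Your proposal is correct and follows essentially the same route as the paper: both arguments show that $R$ must have exponential size in $m$ (the paper via a random covering partition $S$ and a Chernoff/union-bound argument, you via the equivalent sphere-covering formulation, which you even note is the same Chernoff estimate) and then invoke Lemma~\ref{lem_vc_cover} to turn $\log|R|=\Omega(m)$ into $VC(R)\ge m^\alpha$. There is no substantive difference between the two presentations.
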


\begin{proof} [sketch] We use a probabilistic construction argument:
Consider a partition of $U$, $S=(S_1,S_2)$, that is chosen,
uniformly at random, out of all possible partitions in $C(U)$.
Observe that (by the Chernoff bounds), for every partition
$T=(T_1,T_2)\in R$, the probability that $|S_1\cap T_1|+|S_2\cap
T_2|\geq
(\frac{1}{2}+\epsilon)(|S_1|+|S_2|)=(\frac{1}{2}+\epsilon)m$ is
exponentially small in $m$. Hence, for $R$ to be
$(\frac{1}{2}+\epsilon)$-approximate it must contain exponentially
many partitions in $C(U)$. We can now apply Lemma~\ref{lem_vc_cover}
to conclude the proof.
\end{proof}

\subsection{Lower Bounding the VC Dimension of Collections of
General Partitions}

Observe that the proof Lemma~\ref{lem_vc_cover} heavily relied on
the fact that the partitions considered were covering partitions.
Dealing with collections of general partitions necessitates the
development of different techniques for lower bounding the VC
dimension. We shall now present such a method, that can be regarded
as an extension of the Sauer-Shelah Lemma~\cite{Sauer72,Shelah72} to
the case of collections of general partitions:

\begin{defn} [distance between partitions]
Given a universe $U$, two partitions in $P(U)$, $(T_1,T_2)$ and
$(T'_1,T'_2)$, are said to be $b$-far (or at distance $b$) if $|T_1
\cap T'_2| + |T'_1 \cap T_2|\geq b$.
\end{defn}

\begin{defn}
Let $t(\epsilon,k,m)$ be the smallest possible number of sets $E
\subset [m]$ that are shattered by a set $R$ of partitions of size
$k$, such that every two elements in $R$ are at least $\epsilon
m$-far.
\end{defn}

\begin{obs}\label{obs_vc-counting}
Suppose $k \geq 1$ and $\epsilon m \geq 1$. Then if $t(\epsilon,k,m)
> \sum_{i=1}^r {m \choose r}$ then the VC dimension of any
collection of partitions of size at least $k$ for which every two
partitions are at least $\epsilon m$-far has to be at least $r+1$.
\end{obs}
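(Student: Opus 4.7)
The plan is a straightforward counting-by-contradiction argument. Suppose $R$ is a collection of partitions of $[m]$ with $|R|\geq k$, pairwise at distance at least $\epsilon m$, and yet $VC(R) \leq r$. I would produce a lower bound on the number of subsets of $[m]$ shattered by $R$ coming from the definition of $t(\epsilon,k,m)$, a matching upper bound coming from the VC-dimension assumption, and show that the hypothesis $t(\epsilon,k,m) > \sum_{i=1}^r \binom{m}{i}$ makes these two estimates incompatible.

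For the lower bound, since $|R|\geq k$, I would pick an arbitrary subcollection $R'\subseteq R$ with $|R'|=k$. The pairwise-far condition is inherited by $R'$, so by the very definition of $t(\epsilon,k,m)$ the collection $R'$ shatters at least $t(\epsilon,k,m)$ subsets of $[m]$. Shattering is monotone under enlargement of the collection: the projection $R_{|E}$ only grows when more partitions are added to $R'$, so the inclusion $C(E)\subseteq R'_{|E}$ is preserved when we pass back from $R'$ to $R$. Hence $R$ itself shatters at least $t(\epsilon,k,m)$ subsets of $[m]$.

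For the upper bound, $VC(R)\leq r$ means that every shattered $E\subseteq[m]$ satisfies $|E|\leq r$, and the number of nonempty subsets of $[m]$ of size at most $r$ is exactly $\sum_{i=1}^r\binom{m}{i}$. Combining the two estimates yields $t(\epsilon,k,m) \leq \sum_{i=1}^r \binom{m}{i}$, in direct contradiction with the hypothesis. Therefore $VC(R) \geq r+1$, as claimed.

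I do not foresee a real obstacle: the whole argument is a Sauer--Shelah-flavoured pigeonhole packaged around the definition of $t(\epsilon,k,m)$. The only bookkeeping points are the two monotonicities used (restricting from $R$ to $R'$ preserves the pairwise-far property, and enlarging back from $R'$ to $R$ preserves shattered sets), both immediate from the definitions in Sec.~\ref{subsec_approx-partitions}; the roles of the assumptions $k\geq 1$ and $\epsilon m \geq 1$ are merely to rule out the trivial regime in which the ``pairwise at distance $\geq \epsilon m$'' condition is vacuous or forces $|R|=1$.
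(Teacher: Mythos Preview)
Your argument is correct and is exactly the counting argument the paper intends: the paper's own proof is the single sentence that $\sum_{i}\binom{m}{i}$ bounds the number of subsets of size at most $r$, and you have merely spelled out the two monotonicity steps (passing to a size-$k$ subcollection to invoke the definition of $t$, then passing back to $R$ for shattering) that the paper leaves implicit. There is no substantive difference in approach.
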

\begin{proof}
The proof follows from the fact that $t(\epsilon,k,m)\geq
\sum_{i=0}^r {m \choose r}$ is a bound on the number of sets of size
at most $r$.
\end{proof}

\begin{lemma} [lower bounding the VC dimension of general partitions]\label{lem-VC-partitions}
For all $\epsilon>0,k,m$, $t(\epsilon,k,m) \geq k^{\alpha}$ for some
constant $\alpha>0$.
\end{lemma}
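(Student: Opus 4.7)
The plan is to induct on the universe size $m$, with $\epsilon$ fixed, and to choose the exponent $\alpha=\alpha(\epsilon)>0$ only at the end. The base cases ($k\leq 1$, or $m$ bounded in terms of $1/\epsilon$) are handled by direct inspection once $\alpha$ has been chosen small enough. For the inductive step, the first ingredient is a double-counting argument that uses the $\epsilon m$-farness hypothesis to locate a \emph{balanced} element: writing $a_i=|\{(T_1,T_2)\in R:i\in T_1\}|$ and $b_i=|\{(T_1,T_2)\in R:i\in T_2\}|$, summing the distances $|T_1\cap T'_2|+|T'_1\cap T_2|$ over ordered pairs of partitions gives $\sum_{i=1}^m a_ib_i\geq\binom{k}{2}\epsilon m$, so by averaging over $i\in[m]$ there is an index $i^*$ with $a_{i^*}b_{i^*}\geq k^2\epsilon/4$; combined with $a_{i^*},b_{i^*}\leq k$ this forces $a_{i^*},b_{i^*}\geq\epsilon k/4$.

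The second ingredient is a Pajor-style shattering identity tailored to partitions. Let $R^j_{i^*}\subseteq R$ denote the partitions with $i^*\in T_j$ for $j\in\{1,2\}$, and set $S_j=\{E\subseteq[m]\setminus\{i^*\}\;:\;R^j_{i^*}|_{[m]\setminus\{i^*\}}\text{ shatters }E\}$. By splitting each shattered set according to whether it contains $i^*$, and then, for sets containing $i^*$, splitting the covering partitions of that set according to which side $i^*$ is assigned to, one verifies that $E\ni i^*$ is shattered by $R$ iff $E\setminus\{i^*\}\in S_1\cap S_2$, while any $E\not\ni i^*$ lying in $S_1\cup S_2$ is automatically shattered by $R$. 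This yields the master inequality
\[
t(R)\;\geq\;|S_1\cup S_2|+|S_1\cap S_2|\;=\;|S_1|+|S_2|.
\]

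To close the recursion, observe that within each $R^j_{i^*}$ every partition has $i^*$ on the same side, so projection onto $[m]\setminus\{i^*\}$ preserves all pairwise distances; the projected family is still $\epsilon m$-far, hence a fortiori $\epsilon(m-1)$-far. The inductive hypothesis applied to both halves then gives $|S_j|\geq(\epsilon k/4)^{\alpha}$, and the master inequality yields $t(R)\geq 2(\epsilon/4)^{\alpha}k^{\alpha}$. Choosing $\alpha=\log 2/\log(4/\epsilon)$ makes the coefficient exactly $1$ and closes the induction. The main obstacle in the whole argument is the Pajor-style identity itself---carefully checking that a set containing $i^*$ is shattered by $R$ only when $R^1_{i^*}$ \emph{and} $R^2_{i^*}$ \emph{independently} shatter its restriction to $[m]\setminus\{i^*\}$---because once that structural step is in place, the balanced-element lemma combined with a subadditive ($\alpha\leq 1$) recursion does the rest.
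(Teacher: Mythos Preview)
Your proof is correct and follows essentially the same approach as the paper's: locate a balanced element $i^*$, split $R$ according to which side contains $i^*$, apply the Pajor-type inequality $|I|\geq |S_1|+|S_2|$, and recurse on $m-1$. The only cosmetic difference is that you find the balanced element by double-counting over all $\binom{k}{2}$ pairs (yielding $a_{i^*},b_{i^*}\geq \epsilon k/4$), whereas the paper arbitrarily matches $R$ into $k/2$ pairs and averages (yielding $\epsilon k/2$); you are also more explicit than the paper about why projection preserves pairwise distances within each $R^j_{i^*}$ and about the final choice $\alpha=\log 2/\log(4/\epsilon)$.
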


The proof follows the basic idea
of~\cite{ABCH97,BartlettL98,MendelsonV02}. Our novel observation is
that the same proof strategy applies with our new definition of
distance.


\begin{proof} [Sketch] Fix $\epsilon>0,k,m$. We wish to prove that
$t(\epsilon,k,m) \geq k^{\alpha}$, for some constant $\alpha>0$. We
shall bound $t(\epsilon,k,m)$ by induction ($\epsilon$ shall remain
fixed throughout the proof and the induction is on $k$ and $m$). Let
$R$ be some collection of partitions as in the statement of the
lemma. Arbitrarily partition $R$ into pairs. Since the partitions
that make up each pair are at least $\epsilon m$-far there must
exist (via simple counting) an element $e\in U$, such that in at
least $\frac{\epsilon k}{2}$ pairs $(T_1,T_2),(T'_1,T'_2)$, $e\in
T_1\cap T'_2$ or $e\in T'_1\cap T_2$. Let $R'\subseteq R$ be the
collection of all partitions $(T_1,T_2)$ in $R$ in which $e\in T_1$.
Let $R''\subseteq R$ be the collection of all partitions $(T_1,T_2)$
in $R$ in which $e\in T_2$. By the arguments above we are guaranteed
that $|R'|\geq \frac{\epsilon k}{2}$ and $|R''|\geq \frac{\epsilon
k}{2}$.

Let $I$ be all the subsets of $U$ that are shattered by $R$. We wish
to lower bound $|I|$. Let $R'_{-e}$ be all the partitions of
$U\setminus \{e\}$ we get by removing $e$ from $T_1$ for every
partition $(T_1,T_2)\in R'$. Let $I'$ be all the subsets of
$U\setminus\{e\}$ shattered by $R'_{-e}$. As there are at least
$\frac{\epsilon k}{2}$ sets in $R'$, by definition $|I'|\geq
t(\epsilon,\frac{\epsilon k}{2}, m-1)$. Similarly, let $R''_{-e}$ be
all the partitions of $U\setminus \{e\}$ we get by removing $e$ from
$T_1$ for every partition $(T_1,T_2)\in R''$. Let $I''$ be all the
subsets of $U\setminus\{e\}$ shattered by $R''_{-e}$. As there are
at least $\frac{\epsilon k}{2}$ sets in $R''$, by definition
$|I''|\geq t(\epsilon,\frac{\epsilon k}{2}, m-1)$.

We claim that $|I|\geq |I'|+|I''|$. To see why this is true consider
the following argument: All sets in $I'\setminus I''$ and in
$I''\setminus I'$ are distinct and belong to $I$. Let $S$ be a set
in $I'\cap I''$. Observe that this means that not only is $S$ in
$I$, but so is $S\cup\{e\}$. So, $I\geq |I'\setminus
I''|+|I'\setminus I''|+2|I'\cap I''|=|I'|+|I''|$. Hence,
$t(\epsilon,k,m) \geq 2\times t(\epsilon,\frac{\epsilon k}{2},
m-1)$. We now use the induction hypothesis to conclude the proof.
\end{proof}

Lemma~\ref{lem-VC-partitions} and Observation~\ref{obs_vc-counting}
imply the following important corollary:

\begin{cor}\label{cor-VC-partitions}
For every constant $\alpha>0$, and (sufficiently small) constant
$\epsilon
>0$, there exists a $\beta>0$ such that, if $R\subseteq P(U)$ and it
holds that: (1) $|R|\geq e^{m^{\alpha}}$, and (2) every two
partitions in $R$ are $\epsilon m$-far, then $VC(R)\geq m^{\beta}$.
\end{cor}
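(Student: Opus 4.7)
The plan is to combine the two preceding results almost directly: Lemma \ref{lem-VC-partitions} converts the size hypothesis $|R|\geq e^{m^{\alpha}}$ into a lower bound on the number $t$ of shattered subsets, and Observation \ref{obs_vc-counting} converts a lower bound on $t$ into a lower bound on the VC dimension itself. So the corollary is essentially bookkeeping with exponents.

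First I would instantiate Lemma \ref{lem-VC-partitions} at $k=|R|$. For the fixed $\epsilon>0$ in the hypothesis, the lemma hands me a constant $\gamma=\gamma(\epsilon)>0$ such that the number of subsets of $U$ shattered by $R$ is at least
\[
t(\epsilon,|R|,m)\;\geq\;|R|^{\gamma}\;\geq\;e^{\gamma m^{\alpha}},
\]
where the last inequality uses hypothesis (1). This is a doubly-exponential quantity in a fractional power of $m$.

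Next I would invoke Observation \ref{obs_vc-counting}: to conclude $VC(R)\geq r+1$, it suffices to check $t(\epsilon,|R|,m)>\sum_{i=0}^{r}\binom{m}{i}$. I take $r=\lfloor m^{\beta}\rfloor-1$ for a $\beta>0$ to be chosen in a moment, and use the standard binomial tail bound $\sum_{i=0}^{r}\binom{m}{i}\leq (em/r)^{r}\leq m^{r+1}$, so the right-hand side is at most $e^{(m^{\beta}+1)\ln m}$. Hence it is enough to have
\[
\gamma m^{\alpha}\;>\;(m^{\beta}+1)\ln m.
\]
This comparison is where one has to be a little careful, but there is no real obstacle: choosing any $\beta$ strictly smaller than $\alpha$, say $\beta=\alpha/2$, makes the left side dominate for all sufficiently large $m$, while for the finitely many small $m$ one can absorb the discrepancy by shrinking $\beta$ further (or by noting that the statement is only interesting asymptotically, and handling small $m$ by padding the constant).

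Putting these together, Observation \ref{obs_vc-counting} yields $VC(R)\geq r+1\geq m^{\beta}$, which is the desired conclusion. The only parameter-level issue to mind is that $\gamma$ depends on $\epsilon$ via Lemma \ref{lem-VC-partitions}; this is exactly why the statement requires $\epsilon$ to be ``sufficiently small'' and then lets $\beta$ depend on both $\alpha$ and $\epsilon$. No new combinatorial idea is needed beyond the two lemmas already proved.
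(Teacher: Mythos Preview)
Your proposal is correct and is exactly the argument the paper has in mind: the corollary is stated without proof, immediately after Lemma~\ref{lem-VC-partitions} and Observation~\ref{obs_vc-counting}, as a direct consequence of combining the two, and you have simply filled in the routine exponent bookkeeping. One small remark: your explanation of why $\epsilon$ must be ``sufficiently small'' is slightly off---larger $\epsilon$ only strengthens the distance hypothesis and improves the constant $\gamma$ from Lemma~\ref{lem-VC-partitions}---but this does not affect the validity of your argument.
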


We shall now discuss the connection between the value of $\alpha$ of
an $\alpha$-approximate collection of partitions and its VC
dimension. We prove the following theorem:

\begin{thm}\label{thm_vc-non-covering}
Let $R\subseteq P(U)$. If $R$ is
$(\frac{3}{4}+\epsilon)$-approximate (for any constant $\epsilon>0$)
then there exists some constant $\alpha>0$ such that $VC(R)\geq
m^{\alpha}$.
\end{thm}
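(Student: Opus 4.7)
The plan is to exhibit a subfamily $R'\subseteq R$ that is simultaneously sub-exponentially large and pairwise $\epsilon m$-far, so that Corollary~\ref{cor-VC-partitions} applied to $R'$ yields $VC(R')\geq m^{\beta}$; since $R'\subseteq R$, this implies $VC(R)\geq m^{\beta}$. The $3/4$ threshold in the hypothesis is exactly what is needed to make this subfamily-extraction work.

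First I would build a family $\mathcal{C}$ of covering partitions of $U$ that are pairwise $(1/2-\epsilon')m$-far, for some small constant $0<\epsilon'<2\epsilon$; a Gilbert--Varshamov or random-coding argument (covering partitions correspond to binary strings, and the partition distance is their Hamming distance) produces such a family with $|\mathcal{C}|\geq 2^{c\epsilon'^{2}m}$ for some absolute constant $c>0$. For each $S\in\mathcal{C}$, the $(3/4+\epsilon)$-approximation hypothesis furnishes a witness $T^{S}=(T^{S}_{1},T^{S}_{2})\in R$ with $|S_{1}\cap T^{S}_{1}|+|S_{2}\cap T^{S}_{2}|\geq (3/4+\epsilon)m$; set $R'=\{T^{S}:S\in\mathcal{C}\}$.

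The main obstacle, and the heart of the argument, is showing that witnesses to different $S$'s are $\Omega(\epsilon m)$-far in $R$. I would encode each partition of $U$ as a map $U\to\{0,1,2\}$, and for a covering $S$ let $c_{S}\colon U\to\{1,2\}$ be its indicator. The approximation inequality says that the agreement set $A_{S}:=\{i\in U:f_{T^{S}}(i)=c_{S}(i)\}$ has $|A_{S}|\geq (3/4+\epsilon)m$, so by inclusion--exclusion $|A_{S}\cap A_{S'}|\geq (1/2+2\epsilon)m$ for any $S,S'\in\mathcal{C}$. On $A_{S}\cap A_{S'}$ the pair $(f_{T^{S}}(i),f_{T^{S'}}(i))$ equals $(c_{S}(i),c_{S'}(i))\in\{1,2\}^{2}$, so every coordinate at which $c_{S}\neq c_{S'}$ realises a $(1,2)$- or $(2,1)$-swap and contributes to the partition distance. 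A short count then gives $d(T^{S},T^{S'})\geq d(S,S')-(m-|A_{S}\cap A_{S'}|)\geq (1/2-\epsilon')m-(1/2-2\epsilon)m=(2\epsilon-\epsilon')m$; taking $\epsilon'=\epsilon$ makes $R'$ pairwise $\epsilon m$-far and in particular guarantees that the $T^{S}$'s are all distinct. This is where the $3/4$ threshold is essential: it is what forces $|A_{S}\cap A_{S'}|>m/2$, unlocking the overlap argument that tracks swaps on the intersection; with a weaker ratio the intersection bound degenerates and the witnesses could all collapse to a single partition.

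Finally, $|R'|=|\mathcal{C}|\geq 2^{c\epsilon^{2}m}$, which exceeds $e^{m^{\alpha_{0}}}$ for any $\alpha_{0}<1$, and $R'$ is pairwise $\epsilon m$-far, so Corollary~\ref{cor-VC-partitions} delivers a constant $\beta>0$ with $VC(R')\geq m^{\beta}$; hence $VC(R)\geq m^{\beta}$ as claimed.
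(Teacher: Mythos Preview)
Your proposal is correct and follows essentially the same three-step strategy as the paper: build an exponential family of pairwise $(\tfrac{1}{2}-\epsilon')m$-far covering partitions via a coding argument (the paper's Claim~\ref{claim-prob-hamming}), show that their $(\tfrac{3}{4}+\epsilon)$-approximation witnesses in $R$ form an exponential, pairwise $\Omega(\epsilon m)$-far subfamily (the paper's Lemma~\ref{lem_MIR-subrange}), and finish with Corollary~\ref{cor-VC-partitions}. Your agreement-set inclusion--exclusion count is in fact a cleaner and more explicit version of the paper's somewhat informal argument in Lemma~\ref{lem_MIR-subrange}.
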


\begin{proof} To prove the theorem we use the following claim:

\begin{claim}\label{claim-prob-hamming} For every
small constant $\delta>0$, there is a family $F$ of partitions
$(T_1,T_2)$ in $C([m])$ and a constant $\alpha>0$ such that
$|F|=e^{\alpha m}$ and every two partitions in $F$ are at least
$\frac{1-\delta}{2}m$-far.
\end{claim}

\begin{proof}
We will prove the claim for partitions $T=(T_1,T_2)$ where $T_1 \cup
T_2 = [m]$. For such partitions, the distance between partition
$T=(T_1,T_2)$ and $T'=(T_1',T_2')$ is just the size of the symmetric
difference of $T_1$ and $T_2$. The existence of the desired
collection now follows from the existence of good codes, see
e.g.~\cite{Sudan04}. For completeness we include the standard
construction to show the existence of $F$.

Let $T=(T_1,T_2)$ and $T'=(T'_1,T'_2)$ be two partitions in $C([m])$
chosen at random in the following way: For each item $j\in [m]$ we
choose, uniformly at random, whether it will be placed in $T_1$ or
in $T_2$. Similarly, we choose, uniformly at random, whether each
item $j$ shall be placed in $T'_1$ or $T'_2$. Using standard
Chernoff arguments it is easy to show that the probability that
there are at least $\frac{m+\delta}{2}$ that appear in either
$T_1\cap T'_1$ or $T_2\cap T'_2$ is exponentially small in
$\epsilon'$. Observe that this immediately implies (by our
definition of distance) that the probability that the distance
between $T$ and $T'$ is less than $\frac{1-\delta}{2}m$ is
exponentially small i $\delta$. Hence, a family $F$ of exponential
size must exist.
\end{proof}

\begin{lemma}\label{lem_MIR-subrange}
Let $\epsilon>0$. Let $R\subseteq P(U)$ such that $R$ is
$(\frac{3}{4}+\epsilon)$-approximate. Then, there is a subset of
$R$, $R'$, of size exponential in $m$ such that every two elements
of $R'$ are at least $\alpha m$-far (for some constant $\alpha>0$).
\end{lemma}

\begin{proof} By Claim~\ref{claim-prob-hamming} we know that, for our universe of
$U$, there exists an exponential-sized family of partitions in
$C(U$), $F$, such that every two partitions in $F$ are at least
$\frac{1-\delta}{2}m$-far (for some arbitrarily small $\delta>0$).
Fix some $T,T'\in F$. By definition of $F$, $T$ and $T'$ are
identical only on at most $\frac{1+\delta}{2}m$ elements (that is,
only for at most $\frac{1+\delta}{2}m$ items $j$, either $j\in
T_1\cap T'_1$ or $j\in T_2\cap T'_2$). Let $R_T$ and $R_T'$
represent two partitions in $R$ that obtain $\frac{3}{4}+\epsilon$
``approximations'' for $T$ and $T$', respectively (because $R$ is
$(\frac{3}{4}+\epsilon)$-approximate such partitions must exist).
Even if we assume that both $R^T$ and $R^{T'}$ are identical on all
elements on which $T$ and $T'$ are identical, we are still left with
$\frac{1-\delta}{2}m$ elements. Observe that for each such element,
if $R^T$ and $R^{T'}$ are identical on it, it holds that it can only
contribute to the approximation obtained by \emph{one} of them. This
implies that to obtain the promised approximation $R^T$ and $R^{T'}$
must differ on quite a lot (a constant fraction) of the elements in
$U$. This, in turn, implies that there is some $\alpha>0$ such that
$R^T$ and $R^{T'}$ are $\alpha m$-far.
\end{proof}

Corollary~\ref{cor-VC-partitions} now concludes the proof of the
theorem.
\end{proof}

\section{Implications For VCG-Based Mechanisms}\label{sec_implications}

In this section we present the connection between our results for
collections of partitions in Sec.~\ref{sec_partitions-vc} and the
problem of social-welfare-maximization in combinatorial auctions. We
use the VC dimension framework developed in the previous section to
present a general technique for proving computational complexity
lower bounds for VCG-based mechanisms.

\subsection{Maximal-In-Range Mechanisms for Combinatorial Auctions}

\paragraph*{2-bidder combinatorial auctions.} We consider auction
environments of the following form: There is a set of items
$1\ldots,m$ that are sold to $2$ bidders, $1$ and $2$. Each bidder
$i$ has a private \emph{valuation function} (sometimes simply
referred to as a \emph{valuation}) $v_i$ that assigns a nonnegative
real value to every subset of the items. $v_i(S)$ can be regarded as
$i$'s maximum willingness to pay for the bundle of items $S$. Each
$v_i$ is assumed to be \emph{nondecreasing}, \emph{i.e.}, $\forall S
\subseteq T$ it holds that $v_i(S)\leq v_i(T)$. The objective is
find a partition of the items $(S_1,S_2)$ between the two bidders
that maximizes the social welfare, \emph{i.e.}, the expression
$\Sigma_i v_i(S_i)$.

It is known that optimizing the social welfare value in $2$-bidder
combinatorial auctions is computationally intractable even for very
restricted classes of valuation functions. In
particular,~\cite{LLN06} shows that this task is NP-hard even for
the simple class of \emph{capped additive valuations}:

\begin{definition} [additive valuations]
A valuation function $a$ is said to be additive if there exist
per-item values $a_{i1},\ldots,a_{im}$, such that for every bundle
$S\subseteq [m]$, $a(S)=\Sigma_{j\in S}\ a_{ij}$.
\end{definition}

\begin{definition} [capped additive valuations]
A valuation function $v$ is said to be a capped additive valuation
if there exist an additive valuation $a$ and a real value $B$, such
that for every bundle $S\subseteq [m]$, $v(S)=\min\{a(S), B\}$.
\end{definition}

Intuitively, a bidder has a capped additive valuation if his value
for each bundle of items is simply the additive sum of his values
for the items in it, up to some maximum amount he is willing to
spend. This class of valuations shall be used throughout this
section to illustrate our impossibility results (as we aim to prove
inapproximablity results the restrictedness of this class works to
our advantage).

\paragraph*{Maximal-in-range mechanisms.} Mechanisms that rely
on the VCG technique to ensure truthfulness (VCG-based mechanisms)
are known to have the useful combinatorial property of being
\emph{maximal-in-range}~\cite{DN07,NR00,R79}:\footnote{Maximal-in-range
mechanisms are a special case of a more general class of mechanisms
called ``affine maximizers''~\cite{R79,LMN}. All of the results in
this paper actually apply to this more general class.}
Maximal-in-range mechanisms are mechanisms that always
\emph{exactly} optimize over a (fixed) set of outcomes. In our
context, this means that for every maximal-in-range mechanism $M$
there exists some $R_M\subseteq P([m])$ such that $M$ always outputs
an optimal outcome \emph{in $R_M$} (with respect to social-welfare
maximization). We refer to $R_M$ as $M$'s \emph{range}.

\begin{definition} [maximal-in-range mechanisms]
A mechanism $M$ is maximal-in-range if there is a collection of
partitions $R_M\subseteq P([m])$ such that for every pair of
valuations, $(v_1,v_2)$, $M$ outputs a partition $(T_1,T_2)\in
argmax_{(S_1,S_2)\in R_M}\ v_1(S_1)+v_2(S_2)$.
\end{definition}

It is know that every maximal-in-range mechanism can be made
incentive compatible via the VCG technique~\cite{NR00,NR01}. This
suggests a general way for the design of truthful mechanisms for
combinatorial auctions: Fix the range $R_M$ of a maximal-in-range
mechanism $M$ to be such that (1) optimizing over $R_M$ can be done
in polynomial time, and (2) the optimal outcome \emph{in $R_M$}
always provides a ``good'' approximation to the
\emph{globally-optimal} outcome. This approach was shown to be
useful in~\cite{DNS05,HKMT04}.

Observe that the maximal-in-range mechanism in which $R_M$ contains
\emph{all} possible partitions of items is computationally
intractable even for capped additive valuations. In contrast, the
fact that bidders' valuations are nondecreasing implies the
following general upper bound:

\begin{obs} [the trivial upper bound]
For any $2$-bidder combinatorial auction, the maximal-in-range
mechanism $M$ for which $R_M=\{([m],\emptyset),(\emptyset,[m])\}$
provides a $\frac{1}{2}$-approximation to the optimal social
welfare.
\end{obs}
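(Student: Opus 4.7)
The plan is to prove the observation by a direct two-line argument exploiting monotonicity and the averaging inequality; there is essentially no obstacle here, and the statement is genuinely a ``trivial'' upper bound used as a baseline for comparison with the inapproximability results to follow.

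First I would fix an input $(v_1,v_2)$ and let $(S_1^\ast,S_2^\ast)$ denote a socially optimal partition, so that $\mathrm{OPT}=v_1(S_1^\ast)+v_2(S_2^\ast)$. The mechanism $M$ with range $R_M=\{([m],\emptyset),(\emptyset,[m])\}$ will, by definition of maximal-in-range, output whichever of the two ``grand bundle'' partitions yields the larger social welfare, which is $\max\{v_1([m]),v_2([m])\}$.

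Next I would invoke the monotonicity assumption on the valuation functions: since $S_i^\ast\subseteq[m]$, we have $v_i([m])\ge v_i(S_i^\ast)$ for $i=1,2$. Consequently $\max\{v_1([m]),v_2([m])\}\ge\max\{v_1(S_1^\ast),v_2(S_2^\ast)\}$. Finally, for any two nonnegative reals $a,b$, $\max\{a,b\}\ge\tfrac{1}{2}(a+b)$, so $\max\{v_1(S_1^\ast),v_2(S_2^\ast)\}\ge\tfrac{1}{2}\mathrm{OPT}$. Chaining these inequalities yields the claimed $\tfrac{1}{2}$-approximation.

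The only thing that could conceivably go wrong is if the range $R_M$ were not well-defined as a set of partitions in $P([m])$, but $([m],\emptyset)$ and $(\emptyset,[m])$ are trivially disjoint pairs, hence legitimate elements of $P([m])$. No VC machinery, no counting, and no auction-specific structure beyond monotonicity is needed — indeed the point of stating the observation is precisely to isolate this easy baseline, so that the later hardness theorems can be phrased as ruling out any approximation ratio noticeably better than $\tfrac{1}{2}$ for polynomial-time VCG-based mechanisms.
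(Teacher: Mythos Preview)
Your proof is correct and matches the paper's treatment: the paper states this as an observation without a formal proof, offering only the one-line gloss that bundling all items and allocating them to the bidder with the highest value yields a $\tfrac{1}{2}$-approximation. Your argument spells out exactly the monotonicity-plus-averaging reasoning that underlies that sentence.
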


That is, the maximal-in-range mechanism that bundles all items
together and allocates them to the bidder with the highest value
provides a $\frac{1}{2}$-approximation to the optimal social
welfare. This mechanism is easy to implement in a
computationally-efficient manner as it only requires learning the
value of each bidder for the bundle of all items.

\paragraph*{Is the trivial upper bound optimal?} Naturally,
we are interested in the question of whether a more clever choice of
range than $\{([m],\emptyset),(\emptyset,[m])\}$ can lead to better
approximation ratios (without jeopardizing computational
efficiency). Let us consider $2$-bidder combinatorial auctions with
capped additive valuations. For this restricted case, a non-truthful
PTAS exists~\cite{AM04}. Can a similar result be obtained via a
maximal-in-range mechanism? We show that the answer to this question
is \emph{No} by proving that the approximation ratios obtained by
computationally-efficient VCG-based mechanisms are always bounded
away from $1$. We stress that these are the first computational
complexity lower bounds on the approximability of VCG-based
mechanisms for combinatorial auctions. In fact, as we shall later
show, in certain cases these bounds extend to \emph{all}
incentive-compatible mechanisms.

\subsection{Putting the VC in VCG}

We now present our method of proving lower bounds on the
approximability of VCG-based mechanisms using the VC framework. On a
high level, our technique for proving that a maximal-in-range
mechanism $M$ cannot obtain an $\alpha$-approximation consists of
three steps:

\begin{itemize}

\item Observe that $M$'s range must be an $\alpha$-approximate collection of partitions.

\item Conclude (from our results in
Sec.~\ref{sec_partitions-vc}) the existence of a shattered set of
items of size $m^{\alpha}$ (if $\alpha$ is sufficiently high).

\item Show a non-uniform reduction from NP-hard $2$-bidder combinatorial
auctions with $m^{\alpha}$ items to the optimization problem solved
by $M$.

\end{itemize}

We illustrate these three steps by proving a lower bound of
$\frac{3}{4}$ for $2$-bidder combinatorial auctions with capped
additive valuations (which naturally extends to the more general
classes of submodular, and subadditive, valuations). We stress that
our proof technique can be applied to prove the \emph{same} lower
bound for practically any NP-hard $2$-bidder combinatorial auction
environment. Essentially, our only requirement from the class of
valuations is that it be expressive enough to contain the class of
0/1-additive valuations defined below.

\begin{definition} [0/1-additive valuations]
A valuation $v$ is said to be 0/1-additive if it is an additive
valuation in which all the per-item values are in $\{0,1\}$.
\end{definition}

We make the following observation:

\begin{obs}\label{obs_connecting}
Any $\alpha$-approximation maximal-in-range mechanism for $2$-bidder
combinatorial auctions with 0/1-additive valuations has a range that
is an $\alpha$-approximate collection of partitions.
\end{obs}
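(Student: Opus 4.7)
The plan is to show the contrapositive-flavored direct statement: starting from an arbitrary target partition $S=(S_1,S_2)\in P([m])$, I will construct a pair of 0/1-additive valuations whose optimal allocation is exactly $S$ and whose optimum value is $|S_1|+|S_2|$; then, since $M$ is $\alpha$-approximate, the partition it chooses from $R_M$ must ``$\alpha$-cover'' $S$ in the sense of Definition~\ref{def_approximate-partitions}.

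Concretely, for a given $S=(S_1,S_2)$, I would define $v_1$ to be the 0/1-additive valuation with per-item value $1$ for $j\in S_1$ and $0$ otherwise, and symmetrically define $v_2$ from $S_2$. Since $S_1\cap S_2=\emptyset$, for any partition $(A_1,A_2)$ of $[m]$ we have $v_1(A_1)+v_2(A_2)=|A_1\cap S_1|+|A_2\cap S_2|\leq |S_1|+|S_2|$, with equality at $(A_1,A_2)=(S_1,S_2)$ (or any allocation that places all of $S_1$ with bidder $1$ and all of $S_2$ with bidder $2$). Hence the globally optimal social welfare on this instance equals $|S_1|+|S_2|$.

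Now I invoke the hypothesis that $M$ is an $\alpha$-approximation mechanism. On the instance $(v_1,v_2)$, the partition $(T_1,T_2)\in R_M$ chosen by $M$ satisfies
\[
v_1(T_1)+v_2(T_2)\;\geq\;\alpha\bigl(|S_1|+|S_2|\bigr).
\]
By construction $v_1(T_1)=|T_1\cap S_1|$ and $v_2(T_2)=|T_2\cap S_2|$, so
\[
|T_1\cap S_1|+|T_2\cap S_2|\;\geq\;\alpha\bigl(|S_1|+|S_2|\bigr),
\]
which is precisely the condition in Definition~\ref{def_approximate-partitions} witnessing that $(T_1,T_2)\in R_M$ is an $\alpha$-approximation to $S$. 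Since $S$ was arbitrary, $R_M$ is $\alpha$-approximate.

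There is no real obstacle here; the statement is essentially a translation between two languages (welfare approximation vs.\ set-overlap approximation), and 0/1-additive valuations are the cleanest way to make that translation tight. The only subtlety worth flagging is the need that the optimum value of the constructed instance equal exactly $|S_1|+|S_2|$, which holds because $S_1$ and $S_2$ are disjoint and each $v_i$ is supported on $S_i$; if the two supports overlapped, a single item could contribute to both valuations and the target welfare could be exceeded by an allocation unrelated to $S$, breaking the reduction.
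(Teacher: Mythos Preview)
Your proof is correct and follows essentially the same approach as the paper's: identify a 0/1-additive valuation with the indicator of a subset, so that a pair with disjoint supports encodes a partition $S=(S_1,S_2)$, and then translate the welfare-approximation guarantee of $M$ directly into the set-overlap inequality of Definition~\ref{def_approximate-partitions}. The paper's proof is a two-line sketch of exactly this argument; you have simply filled in the details it leaves implicit.
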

\begin{proof}
A 0/1-additive valuation can be regarded as an indicator function
that specifies some subset of the universe (that contains only the
items that are assigned a value of $1$). Hence, pairs of such
valuations that specify disjoint subsets correspond to partitions of
the universe. Now, it is easy to see that, by definition, the range
of an $\alpha$-approximation maximal-in-range mechanism must be an
$\alpha$-approximate collection of partitions.
\end{proof}

Observation~\ref{obs_connecting} enables us to make use of
Theorem~\ref{thm_vc-non-covering} to conclude that:

\begin{thm}\label{thm_main}
The range of any $(\frac{3}{4}+\epsilon)$-approximation
maximal-in-range mechanism for $2$-bidder combinatorial auctions
with 0/1-additive valuations shatters a set of items of size
$m^{\alpha}$ (for some constant $\alpha>0$).
\end{thm}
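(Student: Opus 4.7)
The plan is to derive Theorem~\ref{thm_main} as a direct chaining of the two results that have just been set up in this section: Observation~\ref{obs_connecting} bridges the auction setting to abstract collections of partitions, and Theorem~\ref{thm_vc-non-covering} provides the VC lower bound inside that framework. No new technical ingredient is needed; the work has already been done upstream.

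Concretely, I would let $M$ be any $(\tfrac{3}{4}+\epsilon)$-approximation maximal-in-range mechanism for $2$-bidder combinatorial auctions with 0/1-additive valuations, and denote its range by $R_M \subseteq P([m])$. The first step is to invoke Observation~\ref{obs_connecting}: because $M$ is a $(\tfrac{3}{4}+\epsilon)$-approximation MIR mechanism over the class of 0/1-additive valuations, its range $R_M$ is automatically a $(\tfrac{3}{4}+\epsilon)$-approximate collection of partitions in the sense of Definition~\ref{def_approximate-partitions}. The point here is simply that a pair of 0/1-additive valuations with disjoint supports encodes an arbitrary partition of $[m]$, so any global approximation guarantee of $M$ transfers verbatim to an $\alpha$-approximation guarantee on $R_M$ viewed purely as a combinatorial object.

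The second step is to apply Theorem~\ref{thm_vc-non-covering} to $R = R_M$ with $U=[m]$. This immediately yields $VC(R_M) \geq m^{\alpha}$ for some constant $\alpha>0$ depending only on $\epsilon$. Unwinding Definitions~\ref{def_VC} and~\ref{def_shattered}, this is exactly the statement that there exists a subset $E \subseteq [m]$ with $|E| \geq m^{\alpha}$ that is shattered by $R_M$, which is the conclusion of the theorem.

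I expect no real obstacle in this derivation itself; the genuine difficulty already lives in Theorem~\ref{thm_vc-non-covering}, whose proof in turn rests on the good-code construction of Claim~\ref{claim-prob-hamming}, the pairwise-distance amplification in Lemma~\ref{lem_MIR-subrange}, and the partition version of Sauer--Shelah encapsulated in Lemma~\ref{lem-VC-partitions} and Corollary~\ref{cor-VC-partitions}. From this perspective, Theorem~\ref{thm_main} is best viewed as a corollary whose only purpose is to translate the abstract VC statement into the language of VCG-based auction mechanisms, setting the stage for the non-uniform reduction step outlined at the start of this section.
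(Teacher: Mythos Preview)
Your proposal is correct and matches the paper's approach exactly: the paper also obtains Theorem~\ref{thm_main} by combining Observation~\ref{obs_connecting} with Theorem~\ref{thm_vc-non-covering}, treating it as a direct corollary that translates the abstract VC bound into the auction setting. Your identification of where the real work lies (Theorem~\ref{thm_vc-non-covering} and its supporting lemmas) is accurate.
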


We can now exploit the existence of a large shattered set of items
to prove our lower bound by showing a non-uniform reduction from an
NP-hard optimization problem:

\begin{thm}\label{thm_lower_non-cover}
No polynomial-time maximal-in-range mechanism obtains an
approximation ratio of $\frac{3}{4}+\epsilon$ for $2$-bidder
combinatorial auctions with capped additive valuations unless NP
$\subseteq$ P/$poly$.
\end{thm}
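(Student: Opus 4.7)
The plan is to instantiate the three-step template outlined just above the theorem. Suppose for contradiction that $M$ is a polynomial-time maximal-in-range mechanism with range $R_M \subseteq P([m])$ achieving a $(\frac{3}{4}+\epsilon)$-approximation for every $2$-bidder combinatorial auction with capped additive valuations on $m$ items. Every $0/1$-additive valuation is capped additive (take the cap to be $m$), so $M$ in particular $(\frac{3}{4}+\epsilon)$-approximates the optimum on $0/1$-additive inputs, and Observation~\ref{obs_connecting} forces $R_M$ to be a $(\frac{3}{4}+\epsilon)$-approximate collection of partitions of $[m]$. Theorem~\ref{thm_main} then yields a set $E \subseteq [m]$ of size $n = m^{\alpha}$ (for some universal constant $\alpha > 0$) that is shattered by $R_M$.

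The next step is a non-uniform reduction from the NP-hard $n$-item version of the same problem (NP-hardness of capped additive $2$-bidder auctions is from~\cite{LLN06}) to a single invocation of $M$ on an $m = n^{1/\alpha}$ item instance. Fix any bijection between the items of the source instance and the elements of $E$. Given source valuations $v_1, v_2$ with per-item values $a_{i,j}$ and caps $B_i$, define $v'_1, v'_2$ on $[m]$ by copying each $a_{i,j}$ onto the corresponding element of $E$, setting all per-item values on $[m] \setminus E$ to zero, and keeping the same caps $B_i$. Each $v'_i$ is capped additive, so the constructed instance is a legal input to $M$, and we have $v'_i(S) = v_i(S \cap E)$ once $S \cap E$ is identified with the corresponding subset of source items.

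Feed $(v'_1, v'_2)$ to $M$ and return the projection $(T_1 \cap E, T_2 \cap E)$ of $M$'s output as the solution to the source instance. Its source-instance social welfare equals $v'_1(T_1) + v'_2(T_2)$, which by definition of maximal-in-range is the maximum of $v'_1(S_1) + v'_2(S_2)$ over $(S_1, S_2) \in R_M$. Because each $v'_i$ depends only on the $E$-part of its argument, this maximum equals the maximum over $R_M|_E$, and since $E$ is shattered by $R_M$ we have $C(E) \subseteq R_M|_E$, so this is at least the optimum over covering partitions of $E$; monotonicity (equivalently, non-negativity of the $a_{i,j}$) identifies the latter with the true optimum of the source instance. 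Hence the projection is exactly optimal. The advice used consists of the polynomial-size description of $M$ and of the set $E$, both depending only on $m$, so this places NP in P/poly.

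The main obstacle is the middle link: showing that $M$ on the zero-padded input produces an \emph{exact} optimum of the source instance, rather than merely a $(\frac{3}{4}+\epsilon)$-approximation to it. This is precisely where shattering is indispensable and where plain $\alpha$-approximateness of $R_M$ would not suffice. The approximation guarantee of $M$ is only used to extract $E$ via Theorem~\ref{thm_main}; once $E$ is in hand, maximal-in-rangeness turns $M$ into an exact optimizer over $R_M|_E \supseteq C(E)$, which (under the zero-padded valuations) is equivalent to exact optimization of the source instance. Everything else---the polynomial blow-up between $n$ and $m$, the fact that the zero-padded extension stays within the capped additive class, and the packaging of $(M, E)$ as polynomial-size non-uniform advice---is routine.
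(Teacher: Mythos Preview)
Your proposal is correct and follows essentially the same approach as the paper's proof: invoke Theorem~\ref{thm_main} (via the $0/1$-additive subclass) to obtain a shattered set $E$ of size $m^{\alpha}$, zero-pad a small capped-additive instance onto $E$, and use maximal-in-rangeness plus $C(E)\subseteq R_M|_E$ to conclude that $M$ solves the embedded instance exactly, yielding a non-uniform reduction. Your write-up is in fact more explicit than the paper's about why the output is \emph{exactly} optimal (the role of monotonicity and of projecting back to $E$), but the argument is the same.
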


\begin{proof}
Let $M$ be a mechanism as in the statement of the theorem. Since
0/1-additive valuations are a special case of capped additive
valuations, by Theorem~\ref{thm_main} there exists a constant
$\alpha>0$ such that $R_m$ shatters a set of items $E$ of size
$m^{\alpha}$. Therefore, given an auction with $m^{\alpha}$ items
and capped additive valuation functions $v_1,v_2$ we can identify
each item in this smaller auction with some unique item in $E$, and
construct valuation functions $v'_1,v'_2$, such that $v'_i$ is
identical to $v_i$ on $E$ and assigns $0$ to all other items.
Observe that this means that $M$ will output for $v'_1,v'_2$ the
optimal solution for $v_1,v_2$ (as $M$'s range contains all
partitions in $C(E)$). We now have a non-uniform reduction from an
NP-hard problem (social-welfare maximization in the smaller auction)
to the optimization problem solved by $M$.
\end{proof}

Recall that the trivial upper bound provides an approximation ratio
of $\frac{1}{2}$. We leave the problem of closing the gap between
this upper bound and our lower bound open. We conjecture that the
trivial upper bound is, in fact, tight. This conjecture is motivated
by the following result.

\paragraph*{The allocate-all-items
case.} We now consider the well-studied case that the auctioneer
must allocate all items~\cite{DS08,LMN}. Observe that, in this case,
the range of a maximal-in-range mechanism can only consist of
covering partitions, for which stronger results are obtained in
Sec.~\ref{sec_partitions-vc}. This enables us to use our technique
to prove the following theorem:

\begin{thm}\label{thm_lower_cover}
For the allocate-all-items case, no polynomial-time maximal-in-range
mechanism obtains an approximation ratio of $2-\epsilon$ for
$2$-bidder combinatorial auctions with capped additive valuations
unless NP $\subseteq$ P/$poly$.
\end{thm}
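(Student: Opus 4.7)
The plan is to follow the same three-step recipe used in the proof of Theorem~\ref{thm_lower_non-cover}, but to replace the appeal to Theorem~\ref{thm_main} (which rested on the general-partitions bound Theorem~\ref{thm_vc-non-covering}) by the stronger covering-partitions result Theorem~\ref{thm_vc_covering}. The structural fact that makes this swap legal is that in the allocate-all-items regime the range $R_M$ of any maximal-in-range mechanism $M$ consists entirely of covering partitions, so $R_M\subseteq C([m])$.

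First I would translate the $(2-\epsilon)$-approximation guarantee into an $\alpha$-approximation property of $R_M$. A $(2-\epsilon)$-approximation means $\mathrm{ALG}\geq\mathrm{OPT}/(2-\epsilon)=(\tfrac12+\delta)\mathrm{OPT}$ for the constant $\delta=\frac{\epsilon}{2(2-\epsilon)}>0$. Given any $S=(S_1,S_2)\in P([m])$, instantiate the auction with the 0/1-additive valuations $v_i$ whose $1$-sets are $S_1$ and $S_2$; these are capped additive (take any cap $\geq m$), and the optimum social welfare is exactly $|S_1|+|S_2|$, attained at any covering partition refining $S$. Therefore $M$'s output $T=(T_1,T_2)\in R_M$ satisfies $|S_1\cap T_1|+|S_2\cap T_2|\geq (\tfrac12+\delta)(|S_1|+|S_2|)$, so $R_M$ is $(\tfrac12+\delta)$-approximate in the sense of Definition~\ref{def_approximate-partitions}.

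Since $R_M\subseteq C([m])$ and $R_M$ is $(\tfrac12+\delta)$-approximate, Theorem~\ref{thm_vc_covering} yields a constant $\alpha>0$ and a shattered set $E\subseteq [m]$ of size $m^{\alpha}$. Observe that $R_{M|E}=C(E)$: the containment $C(E)\subseteq R_{M|E}$ is shattering, and the reverse containment holds because the projection to $E$ of any covering partition of $[m]$ is a covering partition of $E$.

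Finally I would close with the non-uniform reduction, exactly as in Theorem~\ref{thm_lower_non-cover}. Given an NP-hard instance of $2$-bidder capped-additive social-welfare maximization on $m^{\alpha}$ items (NP-hardness from~\cite{LLN06}), identify those items with $E$ and lift each $v_i$ to a capped additive $v'_i$ on $[m]$ by setting the per-item weights outside $E$ to zero. The social welfare of any $T\in R_M$ under $(v'_1,v'_2)$ then depends only on $T_{|E}\in C(E)$, and since $R_{M|E}=C(E)$, running $M$ on $(v'_1,v'_2)$ returns an optimal allocation for the original smaller problem, placing it in $\mathrm{P}/\mathit{poly}$. The allocate-all constraint causes no trouble because items outside $E$ contribute zero to both bidders and can be dumped arbitrarily. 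The one place that needs care — and is the nearest thing to an obstacle — is verifying that the reduction genuinely delivers an NP-hard \emph{allocate-all-items} instance on $E$; this is immediate since the $[\mathrm{LLN}06]$ hardness construction uses capped additive valuations and, after padding with zero-weight items if needed, is unaffected by requiring that all items be allocated.
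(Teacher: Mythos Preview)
Your proposal is correct and follows essentially the same route the paper indicates: the paper does not spell out a proof of this theorem but says to rerun the three-step template of Theorem~\ref{thm_lower_non-cover} with the covering-partitions bound (Theorem~\ref{thm_vc_covering}) in place of Theorem~\ref{thm_vc-non-covering}, which is exactly what you do. Your extra care in translating the $2-\epsilon$ convention into $(\tfrac12+\delta)$-approximateness, in noting that $R_{M|E}=C(E)$ exactly, and in checking that the allocate-all-items constraint is harmless for the reduction are all appropriate elaborations of the paper's sketch.
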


If bidders have subadditive valuations, and all items are allocated,
then maximal-in-range mechanisms are the \emph{only} truthful
mechanisms~\cite{DS08}. Since capped additive valuations are a
special case of subadditive valuations, the lower bound in
Theorem~\ref{thm_lower_cover} holds for \emph{all} truthful
mechanisms in this more general environment. In
Appendix~\ref{apx_ajtai}, we show that, for a superclass of capped
additive valuations, it is possible to relax the computational
assumption in the statement of Theorem~\ref{thm_lower_cover} to the
assumption that NP is not contained in BPP. This is achieved by
using Ajtai's~\cite{Ajtai98} probabilistic version of the
Sauer-Shelah Lemma.

\section{Discussion and Open Questions}\label{sec_open}

We believe that our work opens a new avenue for proving
complexity-theoretic inapproximability results for maximal-in-range
mechanisms for auctions. In particular, the following important
questions remain wide open:

\begin{enumerate}

\item {\bf Lower bounding the VC dimension of $k$-tuples of disjoint sets, where $k\geq 3$.}
Lemma~\ref{lem-VC-partitions} presents a lower bound on the VC
dimension of pairs of disjoint sets. This enabled us to prove
inapproximability results for $2$-bidder combinatorial auctions. We
believe that the development of advanced VC technology for
$k$-tuples of disjoint sets, where $k\geq 3$, is the key to proving
such results for $k$-bidder combinatorial auctions.

We note that even if all bidders in an $n$-bidder combinatorial
auctions have capped additive valuations, the best (deterministic)
approximation ratio obtained by a truthful mechanism is, to date,
$O(\min\{n,\sqrt{m}\})$ (constant non-truthful approximation ratios
exist). This truthful approximation is achieved by a simple
maximal-in-range mechanism~\cite{DNS05} (using randomization,
improved, but still non-constant, approximation ratios are
achievable~\cite{DNS06, D07}). A straightforward application of our
techniques yields the following result:

\begin{thm}
For any constant number of bidders $n$ with capped additive
valuations, and for any $\epsilon>0$, no maximal-in-range mechanism
can obtain an approximation ratio of $\frac{n+1}{2n}+\epsilon$
unless NP $\subseteq$ P/$poly$.
\end{thm}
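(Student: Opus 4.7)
The plan is to replicate the three-step template that yielded Theorem~\ref{thm_lower_non-cover}, with the partition VC machinery of Section~\ref{sec_partitions-vc} extended from pairs to $n$-tuples of disjoint sets. First I would lift all the basic definitions (projection, shattering, VC dimension, distance, $\alpha$-approximate collection) to $n$-partitions in the obvious way; the distance between two $n$-partitions counts the items placed in different parts. The analog of Observation~\ref{obs_connecting} carries over directly, since $n$ pairwise-disjointly-supported $0/1$-additive valuations specify an $n$-partition of their supports: thus the range of any $(\frac{n+1}{2n}+\epsilon)$-approximate MIR mechanism is an $(\frac{n+1}{2n}+\epsilon)$-approximate collection of $n$-partitions.

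The heart of the proof is an $n$-partition analog of Lemma~\ref{lem_MIR-subrange}: if $R$ is $(\frac{n+1}{2n}+\epsilon)$-approximate, then $R$ contains an exponentially large subfamily whose elements are pairwise $\Omega(m)$-far. As in Claim~\ref{claim-prob-hamming}, build a family $F$ of covering $n$-partitions of $[m]$ by independently assigning each item uniformly to one of the $n$ parts; two such independent $n$-partitions place a given item in the same part with probability $1/n$, so a Chernoff bound yields exponentially many $S \in F$ pairwise agreeing on at most $(1/n+\delta)m$ items. For each $S \in F$ fix an approximator $R^S \in R$. The key accounting, for distinct $S, S' \in F$ with agreement set $A$ ($|A| \leq (1/n+\delta)m$) and disagreement set $D = [m]\setminus A$, is that on every $j \in D$ at most one of the events $R^S(j) = S(j)$ and $R^{S'}(j) = S'(j)$ can hold when $R^S(j) = R^{S'}(j)$, so the combined match count is at most $2|A| + |D| + d$, where $d$ is the number of items of $D$ on which $R^S$ and $R^{S'}$ disagree. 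Forcing the combined count to be at least $2\alpha m$ yields $d \geq (2\alpha - 1 - 1/n - \delta)m$, which is $\Omega(m)$ precisely when $\alpha > \frac{n+1}{2n}$. This simultaneously shows that $S \mapsto R^S$ is injective and that its image is pairwise $\Omega(m)$-far.

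Next, generalize Lemma~\ref{lem-VC-partitions} and Corollary~\ref{cor-VC-partitions} to $n$-partitions by the same inductive strategy used for $n=2$: pair up the far-apart partitions in $R$; on each pair some item is placed in different parts, so by averaging over pairs and over the $n(n-1)$ ordered types of disagreement one finds an item $e$ and parts $i \neq i'$ such that an $\Omega(k)$ fraction of pairs have one partition placing $e$ in part $i$ and the other in part $i'$. Restrict to the corresponding subcollections $R^{(i)}, R^{(i')}$, remove $e$ from each, and apply the inductive hypothesis on $k$ and $m$. The closing inequality $|I| \geq |I^{(i)}| + |I^{(i')}|$ holds for the same reason as before, since a set shattered by both subcollections can be augmented by $e$. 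This yields $VC(R) \geq m^\beta$ for some $\beta > 0$, producing a shattered subset $E \subseteq [m]$ of polynomial size in the $n$-partition sense.

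Given the shattered set $E$, the reduction proceeds exactly as in Theorem~\ref{thm_lower_non-cover}: an NP-hard $n$-bidder capped-additive instance on $m^\beta$ items is embedded into $E$ by zeroing out per-item values outside $E$, and $M$'s output restricted to $E$ solves the small instance optimally, yielding a non-uniform polynomial reduction from an NP-hard problem to $M$, contradicting NP $\not\subseteq$ P$/poly$. The main obstacle is the counting inequality in the second paragraph: it is what actually pinpoints the tight threshold $\frac{n+1}{2n}$ (the ``budget'' $|A|+m$ for combined matches against the $2\alpha m$ demand), whereas the shattering induction is mechanical given the $n=2$ argument and the embedding reduction is identical.
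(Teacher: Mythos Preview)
Your overall template and, in particular, the counting that pins down the threshold $\frac{n+1}{2n}$ are correct and match what the paper has in mind when it says this follows by ``a straightforward application of our techniques.'' The code-construction step (random covering $n$-partitions agree on a $1/n$ fraction) and the combined-match inequality $2\alpha m \le 2|A|+|D|+d$ are exactly the right way to see why $\alpha>\frac{n+1}{2n}$ forces $d=\Omega(m)$.

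There is, however, a genuine gap in your extension of Lemma~\ref{lem-VC-partitions}. You claim that ``a set shattered by both subcollections can be augmented by $e$,'' but with the natural notion of $n$-shattering (namely $C(E)\subseteq R_{|E}$, i.e.\ all $n^{|E|}$ covering $n$-partitions of $E$ appear), this is false for $n\ge 3$. Knowing that $E$ is $n$-shattered by $R^{(i)}_{-e}$ and by $R^{(i')}_{-e}$ only tells you that $e$ can be placed in parts $i$ and $i'$ while freely $n$-partitioning $E$; it says nothing about the remaining $n-2$ placements of $e$, so $E\cup\{e\}$ need not be $n$-shattered by $R$. The inductive inequality $|I|\ge |I^{(i)}|+|I^{(i')}|$ therefore breaks at exactly the step you labeled ``mechanical.'' One repair is to run the induction not on $n$-shattered sets but on Natarajan-style triples $(E,f,g)$ with $f(e)\neq g(e)$ on $E$: if the \emph{same} triple is shattered by both $R^{(i)}_{-e}$ and $R^{(i')}_{-e}$, then $(E\cup\{e\},\,f\cup\{e\mapsto i\},\,g\cup\{e\mapsto i'\})$ is shattered by $R$, and the counting goes through with an extra $(n(n-1))^{r}$ factor in the analog of Observation~\ref{obs_vc-counting}, which is harmless for constant $n$. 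This yields a large $E$ together with two label-functions $f,g$; a pigeonhole over the $\binom{n}{2}$ unordered pairs $\{f(e),g(e)\}$ then gives a still-polynomial subset on which two \emph{fixed} parts are $2$-shattered, and the embedding step becomes a reduction from the $2$-bidder problem (bidders $i,j$ active, others zero) rather than the $n$-bidder one. So your strategy is right, but the shattering notion and the final reduction both need this adjustment.
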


We conjecture that a much stronger result is true:

\noindent {\bf Conjecture:}\emph{ No maximal-in-range mechanism can
obtain a constant approximation ratio for the $n$-bidder case.}

\item  {\bf Improved lower bounds for $2$-tuples of disjoint
sets.} We conjecture that, even in $2$-bidder combinatorial auctions
with capped additive valuations, the trivial upper bound of $1\over
2$ is the best possible for maximal-in-range mechanisms (we prove
that this is true under the allocate-all-items assumption). We
believe that such a result can be achieved by strengthening our VC
dimension lower bound in Theorem~\ref{thm_vc-non-covering}.

\noindent {\bf Conjecture:} \emph{No maximal-in-range mechanism can
obtain an approximation ratio of $\frac{1}{2}+\epsilon$ for
$2$-bidder combinatorial auctions with capped additive valuations.}

\item {\bf Relaxing the computational assumptions.} Our computational
complexity results depend on the assumption that SAT does not have
polynomial-size circuits. Can this assumption be relaxed by proving
probabilistic versions of our VC machinery (see
Appendix~\ref{apx_ajtai})?

\item {\bf Characterizing truthfulness in auctions.} Can our inapproximability
results be made to hold for all truthful mechanisms? So far, despite
much work on this subject~\cite{R79, LMN, LMN2, LMNB, DS08, PSS08},
very little is known about characterizations of truthfulness in
combinatorial auctions (and in other multi-parameter environments).

\end{enumerate}

\bibliography{MIR}

\appendix

\section{Strengthening
Theorem~\ref{thm_lower_cover}}\label{apx_ajtai}

We shall now show how, if the valuation functions of the bidders are
slightly more expressive than capped additive valuations, one can
obtain a lower bound as in~\ref{thm_lower_cover} dependent on the
weaker computational assumption that NP is not contained in BPP.
This is achieved by using the probabilistic version of the
Sauer-Shelah Lemma presented by Ajtai~\cite{Ajtai98} to obtain a
\emph{probabilistic} polynomial-time reduction from an NP-hard
problem to the problem solved by the maximal-in-range mechanism. We
are currently unable to prove a similar result for capped additive
valuations.

We consider the class of double-capped additive valuations.
Informally, a bidder has a double-capped additive valuation if he
has an additive valuation, but also has some upper bound on how much
he is willing to spend on different subsets of items, as well as a
global upper bound on how much he is willing to spend overall.

\begin{definition} [double-capped additive valuations]
A valuation function $v$ is said to be a double-capped additive
valuation if there exists a partition of the set of items $[m]$ into
disjoint subsets $S_1,\ldots,S_r$ (that cover all items), an
additive valuation $a$, and real values $B,B_1,\ldots,B_r$, such
that for every bundle $S\subseteq [m]$, $v(S)=\min\{\Sigma_{t=1}^r
\min\{\Sigma_{j\in S_t} a(j),B_t\}, B\}$.
\end{definition}

We prove the following theorem:

\begin{thm}\label{thm_lower_cover-dbav}
For the allocate-all-items case, no polynomial-time maximal-in-range
mechanism obtains an approximation ratio of $2-\epsilon$ for
$2$-bidder combinatorial auctions with double-capped additive
valuations unless NP $\subseteq$ BPP.
\end{thm}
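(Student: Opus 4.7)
The plan is to replay the proof of Theorem~\ref{thm_lower_cover} but to replace the Sauer--Shelah Lemma by Ajtai's probabilistic version, so that the shattered subset $E\subseteq[m]$ is produced by the reduction itself rather than being supplied as non-uniform advice. The payoff is a randomized polynomial-time reduction from an NP-hard problem to the optimization problem solved by $M$, and hence NP~$\subseteq$~BPP.

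First I would observe that, exactly as in the proof of Theorem~\ref{thm_vc_covering}, if $M$ is a polynomial-time maximal-in-range mechanism achieving a $(2-\epsilon)$-approximation in the allocate-all-items case, then the Chernoff argument (applied against random covering partitions witnessed by $0/1$-additive valuations, which are a trivial special case of double-capped additive) forces $R_M\subseteq C([m])$ to be $(\frac{1}{2}+\delta)$-approximate for some constant $\delta>0$, so $|R_M|\geq 2^{\Omega(m)}$. Setting $R_1=\{T_1:(T_1,T_2)\in R_M\}$, and using that $R_M$ is covering so $|R_1|=|R_M|$, I would then invoke Ajtai's probabilistic Sauer--Shelah Lemma~\cite{Ajtai98} to obtain a constant $\gamma>0$ such that a uniformly random $E\subseteq[m]$ with $|E|=m^\gamma$ is shattered by $R_1$ with probability at least $1/2$; shattering by $R_1$ together with the covering property of $R_M$ yields shattering by $R_M$ in the sense of Definition~\ref{def_shattered}.

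The reduction itself takes an instance $(v_1,v_2)$ of $2$-bidder CA with double-capped additive valuations on $k=m^\gamma$ items, which is NP-hard by a straightforward lifting of the capped-additive hardness of~\cite{LLN06} to this superclass. It draws $E$ uniformly of size $k$, fixes a bijection $\pi:E\to[k]$, builds double-capped additive $v'_1,v'_2$ on $[m]$ that transport $v_i$ onto $E$ via $\pi$ while forcing items of $[m]\setminus E$ to contribute nothing, runs $M$ to obtain $(T_1,T_2)$, and returns $(\pi(T_1\cap E),\pi(T_2\cap E))$. When $E$ is shattered, $R_M$ contains a covering partition of $[m]$ whose restriction to $E$ is any prescribed partition of $E$, and because $v'_i$ vanishes on $[m]\setminus E$, $M$ is forced to output a partition whose projection to $E$ is optimal for $(v_1,v_2)$. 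Verifying the candidate against the decision version of the NP-hard problem gives one-sided error, and standard amplification converts the $\geq 1/2$ success probability into a BPP algorithm.

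The hard part is the encoding step. For plain capped additive valuations one would naively try to zero out the additive weights outside $E$, but the single global cap $B$ then couples with the $v_i$-values inside $E$ in ways that do not uniformly preserve the hardness reduction for a random $E$. The group-level caps of double-capped additive valuations resolve this: one takes the partition $\{E,[m]\setminus E\}$ to be the outer group-partition $S_1,\ldots,S_r$, assigns per-group cap $0$ to the dormant group $[m]\setminus E$, and lets the active group $E$ inherit the double-capped structure of the original instance. This decoupling is exactly what lets the reduction handle an $E$ it sees only after sampling, and is the only point where the superclass structure is essential; everything else (the lower bound on $|R_M|$, Ajtai's lemma, recovery of the optimum, BPP amplification) is a direct adaptation of standard tools.
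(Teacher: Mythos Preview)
Your high-level plan (exponential range $\Rightarrow$ probabilistic Sauer--Shelah $\Rightarrow$ randomized reduction $\Rightarrow$ NP $\subseteq$ BPP) matches the paper, but the way you invoke Ajtai's lemma is not what the lemma actually gives, and this breaks the reduction.

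You claim that Ajtai's lemma yields a constant $\gamma>0$ such that a \emph{uniformly random} $E\subseteq[m]$ of size $m^{\gamma}$ is shattered by $R_1$ with probability at least $1/2$. That statement is false in general, even for regular exponential-size families. Take $m$ even and let $R_1$ consist of all $\tfrac{m}{4}$-element subsets of $\{1,\dots,\tfrac{m}{2}\}$; then $|R_1|$ is exponential in $m$ and $R_1$ is regular, yet a set $E$ can be shattered only if $E\subseteq\{1,\dots,\tfrac{m}{2}\}$, which a random $E$ of size $m^{\gamma}$ satisfies with probability roughly $2^{-m^{\gamma}}$. So there is no ``random single shattered set'' version of Ajtai to appeal to. What Ajtai's lemma (Lemma~\ref{lem_ajtai} in the paper) actually produces is a random family of $q$ pairwise-disjoint \emph{blocks} $Q_1,\dots,Q_q$, each of size $l$, such that with high probability for every $f:[q]\to\{0,1\}$ some $Z'\in R_1$ realizes $|Z'\cap Q_j|=f(j)$ for all $j$. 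The shattering is at the level of blocks, not of individual elements.

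Consequently the paper's reduction is structured differently from yours: each item of the small $q$-item auction is identified with an \emph{entire block} $Q_s$, not with a single element of a random $E$. This is precisely where the double-capped additive structure is used: the per-group caps $B_t$ are placed on the blocks $Q_1,\dots,Q_q$ so that bidder $i$'s value saturates on each block and the block behaves like a single item carrying the original weight. Your explanation for why double-capped is needed (decoupling $E$ from $[m]\setminus E$ via a zero cap on the dormant part) is not the actual obstruction; indeed, if your version of Ajtai were true, one could simply zero the additive weights outside $E$ and stay within plain capped additive valuations, contradicting the paper's explicit remark that it is ``currently unable to prove a similar result for capped additive valuations.'' The need for the extra layer of caps comes from the block structure forced by the correct form of Ajtai's lemma, not from any coupling through the global cap $B$.
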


\begin{proof} [sketch]
Let $R$ be the range of a $(2-\epsilon)$-approximation
maximal-in-range mechanism. In the proof of
Theorem~\ref{thm_vc_covering} we show that $R$ must be of
exponential size. Let $R_1$ denote the subsets of items bidder $1$
is assigned in $R$. Because the allocate-all-items assumption holds
then, as explained in Lemma~\ref{lem_vc_cover}, $|R_1|$ is also of
exponential size. In Lemma~\ref{lem_vc_cover}, we used the
Sauer-Shelah Lemma to conclude that there must be a large set $E$
that is shattered in the traditional sense. Now, we make use of
Ajtai's probabilistic version of the Sauer-Shelah Lemma:

\begin{lemma}(\cite{Ajtai98})\label{lem_ajtai}
Let $Z$ be a family of subsets of a universe $R$ that is regular
(i.e., all subsets in $Z$ are of equal size) and $Q\geq
2^{|R|^{\alpha}}$ (for some $0<\alpha\leq 1$). There are integers
$q,l$ (where $|R|,q$ and $l$ are polynomially related) such that if
we randomly choose $q$ pairwise-disjoint subsets of $R$,
$Q_1,...,Q_q$, each of size $l$, then, w.h.p., for every function
$f:[q]\rightarrow\{0,1\}$ there is a subset $Z'\in Z$ for which
$|Z'\cap Q_j|=f(j)$ for all $j\in [q]$.
\end{lemma}

Using Ajtai's Lemma (with $Z=R_1$), we conclude that there must be
sets of items $Q_1,\ldots,Q_q$ as in the statement of the lemma.
Now, a reduction similar to that in Theorem~\ref{thm_lower_cover},
in which each item in the smaller auction is identified with
\emph{all} items in a specific $Q_s$ concludes the proof of the
Theorem.
\end{proof}

\end{document}